\documentclass[runningheads,a4paper]{llncs}

\usepackage{fullpage}
\usepackage{float}

\usepackage{amssymb}								
\usepackage{tikz} \usetikzlibrary{chains,positioning,scopes,arrows,plotmarks}	
\usepackage{comment}
\usepackage{array}								
\usepackage{amsmath,mathtools}							
\usepackage{tabto}								
\usepackage{xspace}								

\newcommand{\bac}{\backslash}
\newcommand{\seb} {\subseteq}
\newcommand{\matN}{\mathbb N}
\newcommand{\matR}{\mathbb R}

\newcommand{\cL}{{\cal L}}
\newcommand{\cM}{{\cal M}}

\newcommand{\cP}{{\cal P}}
\newcommand{\cS}{{\cal S}}
\newcommand{\cW}{{\cal W}}

\newcommand{\tF}{{\tt F}}
\newcommand{\tI}{{\tt I}}
\newcommand{\tL}{{\tt L}}
\newcommand{\tM}{{\tt M}}
\newcommand{\SAT}{\mbox{{\sc Sat}}\xspace}
\newcommand{\SATbar}{\overline{\mbox{{\sc Sat}}}\xspace}
\newcommand{\RAE}{\mbox{{\sc Rae}}\xspace}
\newcommand{\BZ}{\mbox{{\sc Bz}}\xspace}

\begin{document}

\mainmatter

\title{Satisfaction in societies with opinion leaders and mediators:\\properties and an axiomatization}
\author{Fabi\'an Riquelme\thanks{This author is funded by grant BecasChile of the ``National Commission for Scientific and Technological Research of Chile'' (CONICYT), and also supported by 2009SGR-1137 (ALBCOM).}}

\institute{Departament de Llenguatges i Sistemes Inform\`atics, Universitat Polit\`ecnica de Catalunya,\\
              Campus Nord-Ed. Omega, Jordi Girona 1-3, 08034 Barcelona, Spain.\\
	\email{farisori@lsi.upc.edu}
}

\date{}

\maketitle

\begin{abstract}
In this paper we propose the opinion leader-follower through mediators systems ---OLFM systems--- a multiple-action collective choice model for societies. In those societies three kind of actors are considered: {\em opinion leaders} that can exert certain influence over the decision of other actors, {\em followers} that can be convinced to modify their original decisions, and {\em independent actors} that neither are influenced nor can influence; {\em mediators} are actors that both are influenced and influence other actors.
This is a generalization of the opinion leader-follower systems ---OLF systems--- proposed by~\cite{BRS11}.

The {\em satisfaction} score is defined on the set of actors. For each actor it measures the number of society initial decisions in which the final collective decision coincides with the one that the actor initially selected. We generalize in OLFM systems some properties that the satisfaction score meets for OLF systems. By using these properties, we provide an axiomatization of the satisfaction score for the case in which followers maintain their own initial decisions unless all their opinion leaders share an opposite inclination. This new axiomatization generalizes the one given by~\cite{BRS12} for OLF systems under the same restrictions.

\keywords{Collective choice, Follower, Opinion leader, Mediator, Satisfaction, Axiomatization}
\end{abstract}

\section{Introduction}
\label{sec:intro}

Opinion leadership is a well known and established model for communication policy in sociology and marketing. It comes from the {\em two-step flow of communication} theory proposed since the 1940s~\cite{LBG68}. This theory recognizes the existence of collective decision making situations in societies formed by actors called {\em opinion leaders}, who exert influence over other kind of actors called the {\em followers}, becoming in a two-step decision process~\cite{LBG68,KL55}. In the first step of the process, all actors receive information from the environment, generating their own decisions; in the second step, a flow of influence from some actors over others is able to change the choices of some of them~\cite{Tro66}.

In general, a {\em collective (choice) decision making model} for a finite set of actors defines a {\em collective (choice) decision function}. For any set of decisions taken independently by the actors, and represented by an initial decision vector or an initial choice vector, this function assigns a collective decision, i.e., an outcome that corresponds to one of the values ``yes'' or ``no'', ``pass'' or ``reject'', ``agree'' or ``disagree'', ``true'' or ``false'', $1$ or $0$, etc. As usual in decision theory, we focus on a binary set of possible decisions that the actors can take.

Motivated by the theoretical study of the effects that different opinion leader-follower structures can exert in collective decision making systems, a {\em satisfaction} score was defined in~\cite{BRS11,BRS12} for an {\em opinion leader-follower collective decision system} ---OLF system, in short--- which represents societies with opinion leaders, followers and {\em independent actors}. The actors of this latter type neither are influenced nor can influence other members of the society.

According to~\cite{BRS11}, the {\em satisfaction} of an actor in a society refers to the number of possible decisions that all actors can take as a group, such that the collective decision coincides with the decision taken by the actor in the initial choice vector. This general formulation allow us to define the satisfaction score for a generic collective decision making model. In OLF systems, satisfaction is an essential notion, since it is the most extensive notion to characterize the position of an actor in a society~\cite{BRS12}.

There are several properties that the satisfaction score meets specifically for OLF systems~\cite{BRS11}. These properties are related to the variation of satisfaction for certain actors when the relationship with their predecessors or successors is modified. If we restrict our attention to {\em unanimity}, i.e., the specific case when the followers maintain their own initial decisions unless all their opinion leaders share an opposite inclination ---in which case the follower is convinced by their opinion leaders--- then it can be defined an axiomatization of the satisfaction score~\cite{BRS12}. Unanimity is a usual restriction, whose application has been experimentally used in multiple times. For instance, in a recent experiment it was found that actors were more likely to conform the attitudes expressed by a unanimous group than by a non-unanimous group~\cite{VL09}.

Several efforts have been made to find axiomatizations for measures or scores related to decision systems and cooperative games. Moreover, several of these efforts concern to the Banzhaf value~\cite{DS79,Leh88,Bri97,Bri10}, which is a well known power index of simple games, that we shall see is closely related with the satisfaction. The axiomatizations are relevant to know exactly what properties satisfies a measure in a collective decision making model.

In this paper we propose a generalization of the OLF systems, namely the OLFM systems, based on a collective decision making model introduced in~\cite{MRS12}, that provides actors called {\em mediators} that behave at the same time like opinion leaders and followers. The mediators act as intermediate layers of actors, that moderate the influence of opinion leaders over followers, helping us to describe a ``more-than-two-step'' flow of communication. This scenario allows to analyze societies where there are several layers of influence, establishing a more complex hierarchy among the different actors. Furthermore, we also show an axiomatization for the satisfaction score in OLFM systems, by using generalized versions of the properties used for the same score in OLF systems.

The paper is organized as follows.
In Section~\ref{sec:SAT} we define satisfaction and associate this measure with two well known power indices of simple game theory, namely the Rae index and the Banzhaf index.
Section~\ref{sec:OLF} presents the original OLF systems defined in~\cite{BRS11,BRS12}, showing the collective decision function of the model, and the axiomatization of the satisfaction for this model.
Section~\ref{sec:OLFM} presents the generalization of the OLF model through the use of mediators, it defines the generalized versions of the properties of satisfaction, and also defines the new axiomatization.
The paper finishes with some conclusions and remarks.

\section{Simple games and Satisfaction}
\label{sec:SAT}

Let $V$ be a set of actors, as usual $n=|V|$ denotes the number of actors of the system. In general, a {\em collective decision making model} $\cM$ for a set of $n$ actors defines a {\em collective decision function} $C_\cM(x)$, where $x\in\{0,1\}^n$ is the initial decision vector of the actors, assigning one of the values $1$ or $0$ as collective decision; value $1$ represents the decision ``yes'', and value $0$ the decision ``no'' of an actor.

Abusing of notation, we may consider a collective decision making model $C_\cM(X)$ instead of $C_\cM(x)$, where $i\in X\seb 2^V$ if and only if the $i$-th component of $x$ is $1$, i.e., $x_i=1$.

\begin{definition}
We said that $x\in\{0,1\}^n$ is an {\em initial decision vector}, when $x_i$ represents the initial decision of the $i$-th actor of some decision system.
\end{definition}

A well known and deeply studied decision system is the one of simple games, a class of cooperative games equivalent to monotone Boolean functions~\cite{TZ99}.

\begin{definition}
A {\em simple game} is a pair $\Gamma=(V,\cW)$, where $V$ is a set of actors or players, and $\cW$ its set of {\em winning coalitions} such that for all $X,Y\seb V$, if $X\in\cW$ and $X\seb Z$, then $Z\in\cW$.
\end{definition}

Observe that we can associate to any simple game a collective decision function in a natural way. Let $x\in\{0,1\}^n$ be an initial decision vector of the players, the collective decision function associated to $\Gamma$ is defined as follows:

\[
C_\Gamma(x)=
\begin{cases}
 1 & \mbox{if } X(x)\in\cW,\\
 0 & \mbox{otherwise}
\end{cases}
\]
where $X(x)=\{i\in V\mid x_i=1\}$. In simple games as a collective decision making model, it is relevant to study the importance of the players in the decision-making process. In this context, the measures or scores of the players are known as {\em values} or {\em power indices}. One of the most classic and popular power indices is the {\em Banzhaf value}, that corresponds to the proportion of coalitions in which a player plays a critical role~\cite{Pen46,Ban65,Col71}.

\begin{definition}
Let $\Gamma=(V,\cW)$ be a simple game and $i\in V$ an actor or player. The {\em Banzhaf value} $\BZ(i)$ is the number of coalitions in which $i$ is {\em critical}, i.e., $\BZ(i)=|\{X\seb V\mid X\in\cW\mbox{ and }X\setminus\{i\}\notin\cW\}|$.
\end{definition}

Besides the critical players, now we mention other two kind of players~\cite{TZ99}.

\begin{definition}\label{def:dummy_dictator}
Let $(V,\cW)$ be a simple game and $i\in V$ a player:
 \begin{itemize}
  \item $i$ is a {\em dummy} if and only for all $X\seb V$, $X\in\cW$ implies $X\setminus\{i\}\in\cW$;
  \item $i$ is a {\em dictator} if and only if for all $X\seb V$, $X\in\cW$ if and only if $i\in X$.
 \end{itemize}
\end{definition}

Now we define the satisfaction score in general terms, for any collective decision making model.

\begin{definition}\label{def:SAT}
Let $\cM$ be a collective decision making model over a set of $n$ actors.
The {\em satisfaction score} of the actor $i$ is defined as follows:
$$\SAT_\cM(i) = |\{x\in\{0,1\}^n\mid C_\cM(x)=x_i\}|.$$
\end{definition}

It is interesting to note that when the collective decision making model $\cM$ is monotonic, with respect to inclusion, the satisfaction score coincides with the known {\em Rae index}.
This power index was introduced by~\cite{Rae69} for anonymous games and afterwards it was applied by~\cite{DS79} for simple games, being defined as follows:
$$\RAE(i)=|\{X\seb V\mid i\in X\in\cW\mbox{ or } i\notin X\notin\cW\}|.$$

In the context of simple games,~\cite{DS79} established an affine-linear relation between the Rae index and the Banzhaf value~\cite{LMF06}:
\begin{equation}
\label{eq:RAE_BZ}
\SAT(i)=\RAE(i)=2^{n-1}+\BZ(i)
\end{equation}

It is clear that this equality holds for any collective decision making model that is monotonic. Later, in Lemma~\ref{lem:monotony} we show that the OLF systems ---and therefore its generalization, the OLFM systems--- are monotonic, so the just mentioned equality also applies for them.

\begin{lemma}\label{lem:Rae_dummy_dictator}
Let $\cM$ be any monotonic collective decision making model. For any player $i$ we have $\SAT(i)\geq 2^{n-1}$. Moreover, if $i$ is a dummy then $\SAT(i)=2^{n-1}$, and if $i$ is a dictator then $\SAT(i)=2^n$.
\end{lemma}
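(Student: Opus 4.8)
The plan is to lean on the affine-linear relation~\eqref{eq:RAE_BZ}, which by the remark preceding the lemma is available for every monotonic model $\cM$ once we identify $\cM$ with the simple game $(V,\cW)$ given by $\cW=\{X\seb V\mid C_\cM(X)=1\}$; the monotonicity of $\cM$ is precisely the monotonicity axiom for $\cW$, so that $\BZ$, ``dummy'' and ``dictator'' (Definition~\ref{def:dummy_dictator}) all make sense for $\cM$. The first inequality is then immediate: \eqref{eq:RAE_BZ} gives $\SAT(i)=2^{n-1}+\BZ(i)$, and $\BZ(i)=|\{X\seb V\mid X\in\cW,\ X\setminus\{i\}\notin\cW\}|\geq 0$, whence $\SAT(i)\geq 2^{n-1}$.

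For the dummy case I would read off $\BZ(i)=0$ directly from the definition: if $i$ is a dummy, then $X\in\cW$ implies $X\setminus\{i\}\in\cW$, so no coalition $X$ is winning while $X\setminus\{i\}$ is losing; the set counted by $\BZ(i)$ is empty, and hence $\SAT(i)=2^{n-1}+0=2^{n-1}$. For the dictator case I would count critical coalitions: if $i$ is a dictator then $X\in\cW$ iff $i\in X$, and since $i\notin X\setminus\{i\}$ for every $X$, the coalition $X\setminus\{i\}$ is never winning; thus $X$ is critical for $i$ exactly when $i\in X$, there are $2^{n-1}$ such coalitions, so $\BZ(i)=2^{n-1}$ and $\SAT(i)=2^{n-1}+2^{n-1}=2^{n}$.

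An alternative, self-contained route avoids~\eqref{eq:RAE_BZ} entirely: pair each $x\in\{0,1\}^n$ with the vector obtained by flipping its $i$-th coordinate. Monotonicity rules out the configuration in which $C_\cM=1$ on the vector having $x_i=0$ and $C_\cM=0$ on the one having $x_i=1$, so a short case analysis shows every one of the $2^{n-1}$ pairs contains one or two vectors with $C_\cM(x)=x_i$, giving $\SAT(i)\geq 2^{n-1}$; a pair contributes two precisely when flipping $i$ flips the collective outcome, which never happens when $i$ is a dummy (so $\SAT(i)=2^{n-1}$) and always happens when $i$ is a dictator (so $\SAT(i)=2^{n}$).

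There is essentially no serious obstacle in this proof; the only point deserving a word of care is the initial identification of a monotonic collective decision making model with a simple game, so that the vocabulary of Definition~\ref{def:dummy_dictator} and the Banzhaf value legitimately apply to $\cM$ and so that~\eqref{eq:RAE_BZ} can be invoked.
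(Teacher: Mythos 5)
Your proof is correct and follows essentially the same route as the paper: both derive $\SAT(i)\geq 2^{n-1}$ and the dummy case from the relation $\SAT(i)=2^{n-1}+\BZ(i)$, and your dictator computation ($\BZ(i)=2^{n-1}$) is just a trivially different bookkeeping of the paper's direct count of the $2^n$ coalitions satisfying the Rae condition. Your alternative pairing argument is a pleasant self-contained bonus (it essentially reproves equation~(\ref{eq:RAE_BZ})), but the main line of reasoning matches the paper's.
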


\begin{proof}
The sentence $\SAT(i)\geq 2^{n-1}$ is deduced from the equation~(\ref{eq:RAE_BZ}). It is well known that if $i$ is a dummy, then $\BZ(i)=0$, so then $\SAT(i)=2^{n-1}$. If $i$ is dictator, then for any coalition $X\seb V$, if $X\in\cW$ then $i\in X$, and if $X\not\in\cW$ then $i\not\in X$, so hence $\SAT(i)=2^n$. \qed
\end{proof}

Definition~\ref{def:SAT} helps to see the relationship among the satisfaction and both the Rae index and the Banzhaf value. However, there exists an equivalent definition of satisfaction, that facilitates the proofs of the results of the paper~\cite{BRS11}.

\begin{definition}
Let $\cM$ be a collective decision making model over a set of $n$ actors.
The {\em satisfaction score} of the actor $i$ can also be defined as follows:

\begin{equation}\label{SATwithSATbar}
\SAT_\cM(i) = \sum_{x\in\{0,1\}^n}\SATbar_\cM(i,x)
\end{equation}
where
\[
\SATbar_\cM(i,x)=
\begin{cases}

 1 & \mbox{if } C_\cM(x)=x_i\\
 0 & \mbox{otherwise}
\end{cases}
\]
\end{definition}

In what follows, we use simply $\SATbar(i,x)$ and $\SAT(i)$ when there is no risk of confusion about $\cM$.

\section{OLF systems}
\label{sec:OLF}

The outcome of the model presented in this section is computed by influence interactions in a directed bipartite graph. At the end of the process all actors arrive to an stable solution and the collective decision function corresponds to the {\em simple majority voting system}.

We use standard notation for graphs~\cite{Bol98}: $G=(V,E)$ is a directed graph, $V(G)$ denotes the set of vertices or actors, and $E(G)$ is the set of edges, i.e., the set of relations among the actors. We use simply $V$ and $E$ when there is no risk of confusion.
For each $i\in V$, $S_G(i)=\{j\in V\mid (i,j)\in E\}$ denotes the set of {\em successors} of $i$, and $P_G(i)=\{j\in V\mid (j,i)\in E\}$ the set of {\em predecessors} of $i$.
Let $\delta^-(i)=|P_G(i)|$ and $\delta^+(i)=|S_G(i)|$ be the indegree and the outdegree of the node $i$, respectively.

Furthermore, we denote $\cP(V(G))$ as the power set of $V(G)$. Let $x\in\{0,1\}^n$ be a binary vector and $x_i\in\{0,1\}$ its $i$-th component, then $x+i$ and $x-i$ denote an addition and a substraction of the $i$-th actor of $x$, respectively, that is, to assign $1$ or $0$ to the $i$-th component of $x$, respectively.

The model can be formalized as follows.

\begin{definition}\label{def:OLF}
An {\em opinion leader-follower system} $\cS$ ---{\em OLF system}, in short--- for a set of $n$ actors is given by a bipartite digraph $G=(V,E)$, representing the actors' relation, such that the set $V$ is partitioned into three subsets:
 \begin{itemize}
  \item The {\em opinion leaders}: 	\tabto{26ex} $\tL(G)=\{i\in V\mid P_G(i)=\emptyset\text{ and } S_G(i)\neq\emptyset\}$.
  \item The {\em followers}: 		\tabto{26ex} $\tF(G)=\{i\in V\mid P_G(i)\neq\emptyset\text{ and } S_G(i)=\emptyset\}$.
  \item The {\em independent actors}: 	\tabto{26ex} $\tI(G)=\{i\in V\mid P_G(i)=\emptyset\text{ and } S_G(i)=\emptyset\}$.
 \end{itemize}
\end{definition}

When there is no risk of ambiguity, we simply use $S(i)$, $P(i)$, $\tI$, $\tL$ or $\tF$ instead of $S_G(i)$, $P_G(i)$, $\tI(G)$, $\tL(G)$ or $\tF(G)$.
Note that in an OLF system $\cS=(V,E)$, if $(i,j)\in E$ then $i\in\tL$ and $j\in\tF$.

Now we define the collective decision process of an OLF system, according to the unanimity restriction considered by~\cite{BRS12}.

\begin{definition}\label{def:OLF_C}
Given an OLF system $\cS$, the {\em collective decision vector} $c=c_\cS(x)$ associated to an initial decision vector $x$ is defined as 
\begin{equation}\label{eq:OLF_c}
 c_i=\begin{cases}
 b   &  \mbox{if } x_j=b \mbox{ for all } j\in P_G(i),\\
 x_i & \text{otherwise}
\end{cases}
\end{equation}
such that followers maintain their own initial decisions unless all their opinion leaders share an opposite inclination. Thus, restrincting our analysis to situations in which $n$ is an odd number, the {\em collective decision function} $C_\cS(x)$ is defined as 
\begin{equation}\label{eq:OLF_C}
C_\cS(x) = 
\begin{cases}
 1 & \mbox{if } |\{i\in V\mid c_i=1\}| > |\{i\in V\mid c_i=0\}|,\\
 0 & \mbox{if } |\{i\in V\mid c_i=1\}| < |\{i\in V\mid c_i=0\}|.\\
\end{cases}
\end{equation}
corresponding to the alternative with the greatest number of ``votes'' in the final choice vector.
\end{definition}

Observe that an OLF system requires the number of actors to be odd in order to ensure that decisions by the simple majority rule can be reached~\cite{BRS11,BRS12}. That is why inequalities in expression~(\ref{eq:OLF_C}) are strict. If we want to consider an even number of actors, we could just replace either $>$ by $\geq$ or $<$ by $\leq$ in expression (\ref{eq:OLF_C}), in which case the results of the paper remain.

Furthermore, both leaders and independent actors always follow their own inclinations in the collective choice decision vector. A follower follows the unanimous decision among its predecessors or their own inclination.

\begin{example}\label{ex:OLF-example}
Figure \ref{fig:OLF-example} illustrates a bipartite digraph $G=(V,E)$ corresponding to an OLF system over a set of five actors.
For both initial decision vectors $x=(0,1,1,1,0)$ and $y=(1,1,1,1,0)$ we obtain the same collective decision vector $c_\cS(x)=c_\cS(y)=(1,1,1,1,0)$ and the same collective decision $C_\cS(x)=C_\cS(y)=1$.
\begin{figure}[t]
\centering
\begin{tikzpicture}[every node/.style={circle,scale=0.9}, >=latex]
\node[draw](a) at (1.5,0.0)[label=right:$1$] {};
\node[draw](b) at (0.0,1.0)[label=above:$2$] {};
\node[draw](c) at (1.0,1.0)[label=above:$3$] {};
\node[draw](d) at (2.0,1.0)[label=above:$4$] {};
\node[draw](e) at (3.0,1.0)[label=above:$5$] {};
\draw[->] (b) to node {}(a);
\draw[->] (c) to node {}(a);
\draw[->] (d) to node {}(a);
\end{tikzpicture}
\caption{Example of an opinion leader-follower system.\label{fig:OLF-example}}
\end{figure}
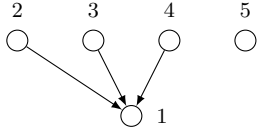
\end{example}

We finish this section with a preliminary result regarding the collective decision function of OLF systems. Recall from Section~\ref{sec:SAT} that the satisfaction score corresponds to the Rae index only for collective decision making models that are monotonic.

\begin{lemma}
\label{lem:monotony}
Let $\cS$ be an OLF system represented by a graph $G$, its corresponding collective decision function is monotonic, with respect to inclusion, on $\cP(V(G))$.
\end{lemma}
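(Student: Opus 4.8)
The plan is to factor $C_\cS$ as a composition of two monotone maps: the rewriting $x\mapsto c_\cS(x)$ that turns an initial decision vector into its collective decision vector, followed by the simple majority rule $c\mapsto C_\cS$. Since a composition of monotone maps is monotone, and since for odd $n$ the majority rule is patently monotone (enlarging the set of coordinates equal to $1$ can only increase $|\{i\mid c_i=1\}|$ and decrease $|\{i\mid c_i=0\}|$, hence cannot turn a strict majority of $1$'s into a non-majority), it suffices to prove that $x\mapsto c_\cS(x)$ is monotone componentwise, i.e. that $x\leq y$ implies $c_\cS(x)\leq c_\cS(y)$ coordinate by coordinate. By an easy induction it is in fact enough to handle $y=x+k$ for a single actor $k$ with $x_k=0$.

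Next I would verify componentwise monotonicity of $x\mapsto c_\cS(x)$ using the partition of $V$ from Definition~\ref{def:OLF}. For $i\in\tL\cup\tI$ the rule~(\ref{eq:OLF_c}) gives $c_i=x_i$, so the $i$-th component of $c_\cS(x)$ is at most that of $c_\cS(x+k)$, with equality unless $i=k$. For a follower $i\in\tF$, recall from~(\ref{eq:OLF_c}) together with the unanimity condition that the $i$-th component of $c_\cS(x)$ equals $1$ precisely when either all predecessors $j\in P_G(i)$ have $x_j=1$, or the predecessors of $i$ are ``mixed'' in $x$ and $x_i=1$; in particular it is impossible for this component to be $1$ while all predecessors of $i$ equal $0$. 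Assuming the $i$-th component of $c_\cS(x)$ is $1$, I would conclude the same for $c_\cS(x+k)$ by splitting into these two situations: if all $j\in P_G(i)$ satisfy $x_j=1$, then the same holds in $x+k\geq x$ and hence the $i$-th component of $c_\cS(x+k)$ is $1$; if instead the predecessors are mixed and $x_i=1$, then $(x+k)_i=1$ still holds and at least one predecessor of $i$ remains equal to $1$, so in $x+k$ the predecessors of $i$ are either all $1$ or still mixed, and in both cases the $i$-th component of $c_\cS(x+k)$ is again $1$.

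Assembling the pieces, $x\leq y$ yields $c_\cS(x)\leq c_\cS(y)$ componentwise, the majority step then gives $C_\cS(x)\leq C_\cS(y)$, and since $n$ is odd $C_\cS$ is defined on all of $\{0,1\}^n$; translating back to subsets this is exactly monotonicity of $C_\cS$ with respect to inclusion on $\cP(V(G))$. The only step requiring genuine care is the follower case, and there specifically the observation that the hypothesis ``$i$-th component equals $1$'' excludes the all-predecessors-$0$ branch of~(\ref{eq:OLF_c}); but even this case analysis is short, so I do not expect a real obstacle.
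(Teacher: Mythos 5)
Your proof is correct, and it rests on the same two pillars as the paper's own argument: monotonicity of the local updating rule~(\ref{eq:OLF_c}) under a single bit flip, and monotonicity of the simple majority rule~(\ref{eq:OLF_C}). The organization, however, is genuinely different. The paper fixes the flipped actor $i$, splits on whether $i\in\tL\cup\tI$ or $i\in\tF$, and explicitly tracks only the $i$-th component of the collective decision vector, dispatching the effect of the flip on the \emph{other} components --- in particular on the followers of a flipped opinion leader --- with ``it is clear''. You instead factor $C_\cS$ as the majority rule composed with $x\mapsto c_\cS(x)$ and prove that the latter map is monotone in \emph{every} component, which is precisely the statement the paper leaves implicit; your follower-component case analysis (all predecessors equal to $1$ in $x$, versus mixed predecessors with $x_i=1$, with the all-zero-predecessor branch excluded by the hypothesis $c_i(x)=1$) supplies that missing verification. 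Note also that your argument subsumes the paper's follower cases 1) and 2): when the flipped actor $k$ is itself a follower with $x_k=0$ and $c_k(x)=1$, only your ``all predecessors equal $1$'' branch can apply, since mixed predecessors would force $c_k(x)=x_k=0$. The price of your route is a slightly longer case analysis; the gain is a self-contained, fully componentwise argument that transfers more readily to the layered OLFM setting of Section~\ref{sec:OLFM}, where propagation of a flip through successive layers is exactly the point that needs checking.
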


\begin{proof}
Let be $x\in\{0,1\}^n$. If $i\in\tL\cup\tI$, as $c_i(x)=x_i$ and $x_i\in\{0,1\}$, it is clear that $C(x-i)\leq C(x)\leq C(x+i)$.
If $i\in\tF$, we have three possibilities:\\
1) $x_i=0$ and $c_i(x)=1$, which implies $c_i(x+i)=1$, so $C(x)\leq C(x+i)$;\\
2) $x_i=1$ and $c_i(x)=0$, which implies $c_i(x-i)=0$, so $C(x-i)\leq C(x)$; and\\
3) $x_i=c_i(x)$, which is the same case as for opinion leaders and independent actors. \qed
\end{proof}

\subsection{Properties and axiomatization for \SAT in OLF systems}

For what follows, we denote a score as a function $f:V\to\matR$ that assigns some real value to each actor of the system. The following properties were introduced by~\cite{BRS11,BRS12}.

\begin{definition}\label{def:props_RAE_OLF}
Let $\cS$ and $\cS'$ be two OLF systems represented by the graphs $G$ and $G'$, respectively, such that $V(G)=V(G')$. Let $i,j,h$ be three different actors. We say that a measure given by the function $f:V\to\matR$ satisfies the properties:
 \begin{enumerate}
  \item {\bf Symmetry}:
	if $S(i)=S(j)$ and $P(i)=P(j)$, then $f(i)=f(j)$.
  \item {\bf Dictator property}:
	if $S(i)=V\bac\{i\}$, then $f(i)=2^n$.
  \item {\bf Dictated independence}:
	if $|P_G(i)|=|P_{G'}(i)|=1$, then $f_\cS(i)=f_{\cS'}(i)$.
  \item {\bf Equal gain property}:
	if $i\in\tL\cup\tI$, $j\in\tF$ and $E(G')=E(G)\cup\{(i,j)\}$,\\
	then $f_{\cS'}(i)-f_\cS(i)=f_{\cS'}(j)-f_\cS(j)$.
  \item {\bf Opposite gain property}:
	if $i\in\tL\cup\tI$, $j\in\tI$ and $E(G')=E(G)\cup\{(i,j)\}$,\\
	then $f_{\cS'}(i)-f_\cS(i)=f_\cS(j)-f_{\cS'}(j)$.
  \item {\bf Horizontal neutrality}:
	if $i\in\tL\cup\tI$, $j\in\tF$, $h\in\tL$, $E(G')=E(G)\cup\{(i,j)\}$ and $h\in P_G(j)$,\\
	then $f_{\cS'}(i)-f_\cS(i)=f_\cS(h)-f_{\cS'}(h)$.
 \end{enumerate}
\end{definition}

The above are desirable properties for scores. The symmetry property means that the score for actors with a symmetric position in the system is the same. A non-symmetrical measure could lead to unconventional results, e.g., two independent actors with different scores.

In this context, a {\em dictator} is an actor that points to all other actors of the system. Hence, in OLF systems there may be at most one dictator, and if $n>1$, the dictator is always an opinion leader. Furthermore, if there is a dictator, then all other actors follow this actor, so they adopt as final decision the initial decision of the dictator. The dictator property states that the dictators have the highest score as possible. Observe that this notion corresponds to the dictator player of simple games introduced in Definition~\ref{def:dummy_dictator}. Furthermore, this property is closely related to Lemma~\ref{lem:Rae_dummy_dictator}.

The dictated independence states that all the followers with only one opinion leader have the same score. However, note that a follower who has only one opinion leader has always to follow this opinion leader. Therefore, since any actor with only one predecessor is a dummy, then for Lemma~\ref{lem:Rae_dummy_dictator} the dictated independece is equivalent to the following:
$$\mbox{if } |P(i)|=1,\mbox{ then }f(i)=2^{n-1}.$$

The remaining properties involve changes in the structure of the OLF systems, by assigning to an actor a new opinion leader. These properties were inspired by similar properties for solution concepts in cooperative game theory~\cite{BRS11,Bri10}. One of the most relevant criteria to define a solution concept for cooperative games is the {\em fairness}, i.e., how well each player's payoff reflects its contribution~\cite{EP09,CEW11}. The most common solution concepts based on the fairness criterion are the power indices, and as we mention in Section~\ref{sec:SAT}, the satisfaction score is indeed equivalent to the Rae index, and it is closely related with the Banzhaf value. In particular, the equal gain property is closely related with the fairness concept by~\cite{Mye77}.

In a reasonable score, the addition of an influence relationship ---a directed edge--- from one actor to another should increase the score of the first actor, because now it is exerting more influence in the system. In this scenario, we can consider two cases:

On the one hand, if the influenced actor was a follower before the addition of the edge, then the score of this follower should also increase, because now it is more difficult to change its initial decision. The equal gain property states that when a follower gets an additional opinion leader, the changes in scores of this follower and of its new opinion leader are the same. For a score that does not meet this property, the addition of a relationship between these kind of actors could be unfair for one of them.

On the other hand, if the influenced actor was an independent actor, then the score of this actor should decrease, because its final decision now depends of the initial decision of the opinion leader. The opposite gain property states that when an independent actor gets an opinion leader, the sum of the scores of these two actors does not change. For a score that does not meet this property, the addition of a relationship between two actors could be unfair for the opinion leader, because it is not getting a profit according to the effort it took to influence the independent actor.

Finally, horizontal neutrality is inspired by the properties considered for collusion of players in cooperative games with transferable utility~\cite{Leh88,Hal94,Bri10}. This property states that, if a follower with at least one opinion leader gets an additional opinion leader, then the sum of scores of the old and new opinion leaders does not change. This means that the increase in the score for the new opinion leader comes fully from a decrease in the score for the other opinion leaders. For a score that does not meet this property, the new opinion leader could not get a profit according to the effort it took to influence an additional follower.\\

It is known that these properties hold for \SAT in OLF systems.

\begin{theorem}[\cite{BRS11,BRS12}]
\label{the:Rae_satisfy_OLF}
For OLF systems, the \SAT score satisfies the six properties of Definition~\ref{def:props_RAE_OLF}.
\end{theorem}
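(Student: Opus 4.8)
The plan is to prove each of the six properties separately, working directly from the equivalent definition of \SAT via $\SATbar$ (equation~(\ref{SATwithSATbar})) together with the explicit form of the collective decision vector~$c$ in equation~(\ref{eq:OLF_c}). The recurring technique will be a \emph{bijection between decision vectors}: to compare $f_{\cS'}$ with $f_\cS$ for a score $f=\SAT$, I pair up vectors $x$ and show that the contribution $\SATbar(\cdot,x)$ is unchanged, or changes in a controlled way, when the single edge $(i,j)$ is added. Since only one edge differs between $G$ and $G'$, the vectors $c_\cS(x)$ and $c_{\cS'}(x)$ can differ only in the coordinate~$j$, and that happens only when $j$ is a follower whose predecessors were unanimous in $G'$ but not in $G$ — i.e.\ when $x_i$ agrees with all the old predecessors of $j$ and $x_j$ is the opposite value. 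This sharply localizes every argument.

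First I would dispatch the three ``static'' properties. \textbf{Symmetry}: if $S(i)=S(j)$ and $P(i)=P(j)$, then the map on $\{0,1\}^n$ that swaps coordinates $i$ and $j$ preserves $C_\cS$ (it preserves each $c_k$ up to the same swap, hence preserves the majority count) and sends $\{x:C(x)=x_i\}$ bijectively to $\{x:C(x)=x_j\}$, giving $\SAT(i)=\SAT(j)$. \textbf{Dictator property}: if $S(i)=V\setminus\{i\}$ then every other actor is a follower with sole predecessor $i$, so $c_k(x)=x_i$ for all $k$, hence $C(x)=x_i$ for every $x$, and $\SAT(i)=2^n$; this also follows from Lemma~\ref{lem:Rae_dummy_dictator} since $i$ is then a dictator in the associated simple game (use Lemma~\ref{lem:monotony}). \textbf{Dictated independence}: if $|P(i)|=1$ then $c_i(x)=x_{p}$ where $p$ is the unique predecessor, so $i$ is a dummy in the associated game; by Lemma~\ref{lem:Rae_dummy_dictator}, $\SAT(i)=2^{n-1}$ regardless of the rest of the structure, which is the stated equivalent form of the property.

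Next I would treat the three ``dynamical'' properties, which are the substantive part. In each case write $\Delta f(k)=f_{\cS'}(k)-f_\cS(k)=\sum_x\bigl(\SATbar_{\cS'}(k,x)-\SATbar_\cS(k,x)\bigr)$. For a fixed $x$ the difference $C_{\cS'}(x)-C_\cS(x)$ is nonzero only for $x$ in the ``critical'' set $\cX$ described above (x_i equals the common old value of $P(j)$, $x_j\neq x_i$, and flipping $c_j$ tips the majority); on $\cX$ the collective decision moves from $x_j$ to $x_i$. For \textbf{equal gain} ($j$ a follower), on $\cX$ actor~$i$ gains a satisfied vector (its vote $x_i$ now wins) while actor~$j$ also gains one (since on $\cX$ we had $C_\cS(x)=x_j$... no: one must be careful and instead split $\cX$ by whether $C_\cS(x)$ equalled $x_i$ or $x_j$) — the key computation is to show the net change for $i$ and for $j$ are each $|\cX^+|-|\cX^-|$ for the same sub-sets, using the symmetry $x\mapsto\bar x$ (global complement), which preserves OLF dynamics and swaps the roles of the two majority outcomes. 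For \textbf{opposite gain} ($j$ independent, becoming a follower of $i$), the analysis of $\cX$ is the same but now $j$'s \emph{own} coordinate is the one being overridden, so $j$ \emph{loses} exactly what $i$ \emph{gains}; the complement involution again matches the two halves. For \textbf{horizontal neutrality} ($j$ a follower already having leader $h$), one compares $\cX$ with the analogous critical set for the edge $(h,j)$: adding $(i,j)$ only changes $c_j$ on vectors where \emph{all} predecessors including $h$ agree, and these are precisely counted against the set controlling $f(h)$, so $\Delta f(i)=-\Delta f(h)$.

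\textbf{The main obstacle} is the bookkeeping in the three dynamical properties: one must partition the critical set $\cX$ according to the value of $C_\cS(x)$ and the parity of the vote counts among the \emph{other} actors, and then exhibit the right involution (global complementation $x\mapsto\bar x$, possibly composed with a coordinate swap) that pairs the ``$i$ gains'' vectors with the ``$h$/$j$ loses'' vectors bijectively. The oddness of $n$ — which forces the majority to be strict and never tied — is what makes $C$ well-defined on all of $\{0,1\}^n$ and makes these involutions clean; I would flag explicitly where it is used. Everything else is a routine, if slightly tedious, case check driven by equation~(\ref{eq:OLF_c}). Since this theorem is quoted from~\cite{BRS11,BRS12}, I would in practice give the bijection for one representative property (equal gain) in full and indicate that the remaining two follow by the same device with the signs bookkept as above.
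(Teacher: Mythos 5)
Your treatment of symmetry, the dictator property and dictated independence is correct and matches the standard argument (the latter two also follow from Lemma~\ref{lem:Rae_dummy_dictator}, as you note). Your overall architecture for the three dynamical properties --- restrict attention to the vectors $x$ with $C_\cS(x)\neq C_{\cS'}(x)$ and compare $\SATbar(\cdot,x)$ pointwise --- is also the right one, and is exactly how the paper handles the generalized version in Theorem~\ref{the:OLFM_Rae_props}. However, there is a genuine gap in how you identify the critical set and, consequently, in the sign analysis that you defer to an involution.

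Under the unanimity rule of equation~(\ref{eq:OLF_c}), adding the edge $(i,j)$ to a follower $j$ that already has predecessors can only \emph{break} unanimity, never create it: $P_{G'}(j)\supseteq P_G(j)$, so if $P_{G'}(j)$ is unanimous then so was $P_G(j)$. Hence $c_j$ changes precisely when $P_G(j)$ is unanimous on $1-x_j$ and $x_i=x_j$, in which case $c_{j,\cS}(x)=1-x_j$ and $c_{j,\cS'}(x)=x_j$. Your characterization (``$x_i$ agrees with all the old predecessors of $j$ and $x_j$ is the opposite value'') describes exactly the complementary, non-critical situation, in which $c_j$ does not change at all; and your claim that on $\cX$ ``the collective decision moves from $x_j$ to $x_i$'' is only correct for the opposite-gain case ($j$ independent), since for equal gain one has $x_i=x_j$ on every critical vector. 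Once the critical set is identified correctly, the bookkeeping you flag as the main obstacle evaporates: for every critical $x$ one reads off directly that $C_\cS(x)=1-x_i$ and $C_{\cS'}(x)=x_i$, hence $i$ and $j$ each gain one unit of satisfaction (equal gain), or $i$ gains while $j$ loses (opposite gain, where $x_i\neq x_j$), or $i$ gains while $h$ loses (horizontal neutrality, where $x_h=1-x_i$). No involution $x\mapsto\bar x$, no parity splitting and no pairing of ``gain'' vectors with ``loss'' vectors is needed --- and the involution you propose is in any case never constructed. As written, the plan would either compute over the wrong critical set or stall at the unproved pairing step, so the three dynamical properties are not actually established.
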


To show the axiomatization of satisfaction in OLF systems,~\cite{BRS12} introduced an additional axiom, which corresponds to the total sum of the satisfaction scores over all actors, i.e., a satisfaction normalization.

\begin{definition}\label{def:normalization_RAE_OLF}
Let $\cS$ be an OLF system represented by a graph $G=(V,E)$. A score given by the function $f:V\to\matR$ is normalized if it satisfies the following property:
 \begin{enumerate}
  \item[$7.$] {\bf Satisfaction normalization}:\\
	$\sum_{i\in V}f(i)=\sum_{x\in\{0,1\}^n}|\{i\in V\mid C(x)=x_i\}|$.
 \end{enumerate}
\end{definition}

Thus, these seven properties provide an axiomatization of the satisfaction for OLF systems.

\begin{theorem}[\cite{BRS12}]
\label{the:OLF_Rae_axioms}
For OLF systems, the \SAT score is the unique measure that satisfies the properties $1$, $2$, $3$, $4$, $5$, $6$ and $7$ of Definitions~\ref{def:props_RAE_OLF} and~\ref{def:normalization_RAE_OLF}.
\end{theorem}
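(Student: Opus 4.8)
The plan is to establish Theorem~\ref{the:OLF_Rae_axioms} in two parts: existence (which is already provided by Theorem~\ref{the:Rae_satisfy_OLF} together with the obvious fact that \SAT\ satisfies property~$7$, since property~$7$ with $f=\SAT$ is exactly the summation of Definition~\ref{def:SAT} over all actors), and uniqueness, which is the real content. So I would focus entirely on uniqueness: assuming $f$ satisfies properties $1$--$7$, I must show $f(i)=\SAT(i)$ for every actor $i$ in every OLF system $\cS=(V,E)$ with $|V|=n$ odd.

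First I would fix $n$ and argue by induction on the number of edges $|E|$ of the graph $G$ representing $\cS$. In the base case $|E|=0$, every actor is independent, so by Lemma~\ref{lem:monotony} and Lemma~\ref{lem:Rae_dummy_dictator} each actor is a dummy and $\SAT(i)=2^{n-1}$; on the other side, symmetry (property~$1$) forces $f$ to be constant on $V$, and the satisfaction normalization (property~$7$) pins that common value to $\tfrac1n\sum_x|\{i\mid C(x)=x_i\}| = \tfrac1n \cdot n\cdot 2^{n-1}=2^{n-1}$, so $f(i)=\SAT(i)$. For the inductive step, given $\cS$ with $|E|\ge 1$, I would pick an edge $(i,j)\in E$ (so $i\in\tL$, $j\in\tF$) and let $\cS'$ be $\cS$ with that edge removed, so $|E(G')|<|E(G)|$ and by induction $f_{\cS'}=\SAT_{\cS'}$ on all of $V$. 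The goal is to propagate this equality back to $\cS$.

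The mechanism for the propagation is the gain/neutrality properties $4$, $5$, $6$, used to control how $f$ changes when the edge $(i,j)$ is added, combined with properties $1$, $2$, $3$ to anchor absolute values. Concretely: if $j$ had no predecessor in $G'$ (i.e.\ $j$ was independent in $\cS'$ and becomes a follower with a unique leader in $\cS$), then dictated independence (property~$3$, equivalently $f_\cS(j)=2^{n-1}$ since $|P_G(j)|=1$) gives $f_\cS(j)$ outright, the opposite gain property~$5$ then determines $f_\cS(i)$ from $f_{\cS'}(i)$, $f_{\cS'}(j)$, $f_\cS(j)$, and all actors other than $i,j$ have unchanged successor/predecessor data so... actually their scores need not be literally unchanged since $f$ is only constrained by the listed properties, not by a locality axiom — here I would invoke symmetry plus the normalization to fix the remaining degrees of freedom, exactly as in the base case. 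If instead $j$ already had a predecessor $h\in\tL$ in $G'$, then horizontal neutrality (property~$6$) gives $f_\cS(i)+f_\cS(h)=f_{\cS'}(i)+f_{\cS'}(h)$ while the equal gain property~$4$ gives $f_\cS(i)-f_{\cS'}(i)=f_\cS(j)-f_{\cS'}(j)$; together with the induction hypothesis on $\cS'$ and the fact that the same relations hold for \SAT\ (Theorem~\ref{the:Rae_satisfy_OLF}), I can solve for the triple of new values and match them to \SAT. The dictator property~$2$ handles the degenerate configurations where removing an edge is not available in a useful form (e.g.\ $n=1$, or a dictator present).

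The main obstacle I anticipate is precisely that properties $1$--$6$ are \emph{differential} — they constrain changes of $f$ along edge additions and symmetry orbits, but do not by themselves fix the absolute level of $f$ on an arbitrary fixed graph; only the global normalization~$7$ supplies one scalar equation. So the delicate part of the argument is to verify that, at every stage of the induction, the combination of (a) the induction hypothesis on the strictly smaller graph $\cS'$, (b) the relevant local property relating $\cS$ to $\cS'$, and (c) symmetry and normalization on $\cS$ itself, yields a determined linear system with the unique solution $f=\SAT$. I would organize this by showing that the ``reachability'' relation — every OLF system on a fixed odd $n$ can be transformed, by a sequence of single edge additions/removals, to the edgeless system — has enough structure that the differential properties plus the one normalization constant propagate the base-case value everywhere; this bookkeeping, rather than any single hard computation, is where the care is needed, and it mirrors the corresponding (more involved) argument we will give for OLFM systems in Section~\ref{sec:OLFM}.
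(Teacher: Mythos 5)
This theorem is cited from~\cite{BRS12}; the paper contains no proof of it, so there is nothing internal to compare against and I will assess your argument on its own. The overall shape is sensible --- existence is Theorem~\ref{the:Rae_satisfy_OLF} plus the observation that property~$7$ is Definition~\ref{def:SAT} summed over $V$, and uniqueness should indeed come from anchoring a reference system and propagating along single edge additions --- and you correctly identify the central danger: properties $3$--$6$ are differential and control only the actors incident to the modified edge. The problem is that the patch you offer does not close that gap. After removing $(i,j)$ and applying the induction hypothesis to $\cS'$, the differential axioms determine at most $f_\cS(i)$, $f_\cS(j)$, and those followers that keep exactly one predecessor; for everything else you invoke ``symmetry plus normalization''. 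But symmetry only equates actors with identical predecessor \emph{and} successor sets, so a generic OLF system has many symmetry classes, while normalization supplies a single scalar equation. Concretely, take $n=7$ with edges $a\to u$ and $b\to v$ and three independent actors: after re-adding $(b,v)$, properties $3$ and $5$ (suitably anchored) fix $f_\cS(u)$, $f_\cS(v)$ and $f_\cS(b)$, but $f_\cS(a)$ and the common value of the independents are two unknowns facing one equation, and no axiom forces $f_\cS(a)=f_{\cS'}(a)$. Making the step work requires comparing $\cS$ against \emph{several} smaller systems (at least one per leader, in a suitable order) and verifying that the resulting linear system is nonsingular; that bookkeeping, which you explicitly defer, is the actual content of the uniqueness proof.

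There are also two local slips. First, in the base case the actors of the edgeless system are not dummies: each is critical in every coalition of size $(n+1)/2$ containing it, so $\SAT(i)=2^{n-1}+\binom{n-1}{(n-1)/2}$ and $\sum_{x}|\{i\mid C(x)=x_i\}|>n2^{n-1}$. The uniqueness conclusion there survives, since symmetry plus normalization still force $f$ to equal the common \SAT value whatever it is, but the displayed numbers are wrong. Second, property~$3$ as stated is only a cross-system invariance; to conclude $f_\cS(j)=2^{n-1}$ for an arbitrary $f$ satisfying the axioms you must first evaluate that constant in some reference system (e.g.\ the dictator configuration via properties $1$, $2$ and $7$). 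The remark after Definition~\ref{def:props_RAE_OLF} that dictated independence ``is equivalent to $f(i)=2^{n-1}$'' rests on Lemma~\ref{lem:Rae_dummy_dictator}, which applies to \SAT, not to a general $f$.
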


\section{Generalized model: OLFM systems}
\label{sec:OLFM}

In this section we propose the {\em opinion leader-follower through mediators systems} ---OLFM systems--- as a generalization of the OLF systems.
OLFM systems allow to model decision making situations with mediators, i.e., actors that behave as opinion leaders and followers, in the sense that they receive their influence from opinion leaders or other mediators, and can influence the followers or other mediators.

Therefore, while OLF systems are supported on directed bipartite graphs, OLFM systems are supported on layered digraphs. 

\begin{definition}
A {\em layered digraph} is a digraph $G=(V,E)$ where $V$ can be partitioned into $k$ subsets $\cL_1,\ldots,\cL_k$ called {\em layers}, so that every edge connects a vertex from one layer to another vertex in a layer immediately below, i.e., for all $(a,b)\in E$, $a\in\cL_i$ and $b\in\cL_{i+1}$, for some $1\leq i<k$.
\end{definition}

This generalization allows to represent more complex social structures in which there are more than only two hierarchical levels.

\begin{definition}\label{def:OLFM}
An {\em opinion leader-follower through mediators system} $\cS$ ---an {\em OLFM system}, in short--- for a set of $n$ actors is given by a layered digraph $G=(V,E)$, such that the set $V$ is partitioned into four subsets:
 \begin{itemize}
  \item The opinion leaders: 	\tabto{26ex} $\tL(G)=\{i\in V\mid P_G(i)=\emptyset\text{ and } S_G(i)\neq\emptyset\}$.
  \item The followers: 		\tabto{26ex} $\tF(G)=\{i\in V\mid P_G(i)\neq\emptyset\text{ and } S_G(i)=\emptyset\}$.
  \item The independent actors: \tabto{26ex} $\tI(G)=\{i\in V\mid P_G(i)=\emptyset\text{ and } S_G(i)=\emptyset\}$.
  \item The {\em mediators}: 	\tabto{26ex} $\tM(G)=\{i\in V\mid P_G(i)\neq\emptyset\text{ and } S_G(i)\neq\emptyset\}$.
 \end{itemize}
\end{definition}

As for OLF systems, for OLFM systems we also restrict our attention to an odd number of actors. Both the collective decision vector and the collective decision function of the system is the same than for OLF systems ---see expressions (\ref{eq:OLF_c}) and (\ref{eq:OLF_C}) in Definition~\ref{def:OLF_C}---. However, here the collective decision vector must be determined in order, starting from the actors in the first layer, then the ones in the second layer, and so on.

Observe that the opinion leaders and independent actors belong to the first layer of the graph, $\cL_1$. The mediators are distributed into {\em layers of mediation}, whereas there are no mediators pointing to upper layers. The opinion leaders can only be connected with the mediators of the first layer of mediation, $\cL_2$; the mediators of the last layer of mediation can only be connected with the followers, and the mediators of interlayers can only be connected with the mediators of the layer immediately below. Moreover, $\cL_1=\tL\cup\tI$ and for all $i\in\cL_k$, $i\in\tF$. Hence, the OLF systems can be seen as OLFM systems with only two layers, i.e., with $k=2$.

Note also that the influence of actors in higher layers can affect the actors' decision in much lower layers. From Lemma~\ref{lem:monotony}, it is easy to see that the collective decision function under OLFM systems ---like in the OLF systems--- is monotonic.

\begin{example}\label{ex:OLFM_three_layer}
Figure \ref{fig:OLFM_three_layer} illustrates a graph $G$ corresponding to an OLFM system over a set of seven actors.
Here $\tL=\{1,2\}$, $\tI=\{3\}$, $\tM=\{4,5\}$ and $\tF=\{6,7\}$.
The computation of the collective decision function is shown in Table~\ref{tab:ex_c(x)_OLFM}, where the initial decision vectors are ordered according to binary numeration.

\begin{figure}[t]
\centering
\begin{tikzpicture}[every node/.style={circle,scale=0.9}, >=latex]
\node[draw](a) at (1,2.0)[label=above:$1$] {};
\node[draw](b) at (3,2.0)[label=above:$2$] {};
\node[draw](c) at (5,2.0)[label=above:$3$] {};
\node[draw](d) at (0,1.0)[label=left :$4$] {};
\node[draw](e) at (2,1.0)[label=right:$5$] {};
\node[draw](f) at (4,1.0)[label=right:$6$] {};
\node[draw](g) at (1,0.0)[label=right:$7$] {};
\draw[->] (a) to node {}(d);
\draw[->] (a) to node {}(e);
\draw[->] (b) to node {}(e);
\draw[->] (b) to node {}(f);
\draw[->] (d) to node {}(g);
\draw[->] (e) to node {}(g);
\end{tikzpicture}
\caption{An OLFM system with one layer of mediation.\label{fig:OLFM_three_layer}}
\end{figure}
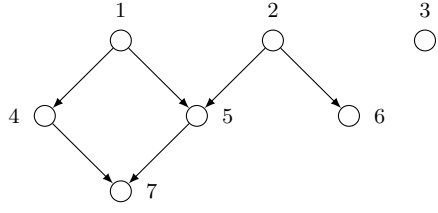

The vertical suspension points on the table indicate that both the collective decision vector and the collective decision function are the same than for the previous and the next decision vector.
Thus, according to our computation we have $\SAT(1)=104$, $\SAT(2)=\SAT(5)=88$, $\SAT(3)=\SAT(7)=72$ and $\SAT(4)=\SAT(6)=64$.

\begin{table}[t]
\centering
{\small
\begin{tabular}{|c|c|@{\,\,}c@{\,\,}!{\vrule width 1pt}c|c|@{\,\,}c@{\,\,}!{\vrule width 1pt}c|c|@{\,\,}c@{\,\,}|}\hline
 $x$     & $c(x)$  & $C(x)$ & $x$ & $c(x)$ & $C(x)$ & $x$ & $c(x)$ & $C(x)$\\\hline
 0000000 &         &   & 0110100 & 0110110 & 1 & 1001110 & 1001101 & 1 \\[-1.5ex]
 $\vdots$& 0000000 & 0 & 0110101 & 0110111 & 1 & 1001111 & 1001101 & 1 \\
 0001111 &         &   & 0110110 & 0110110 & 1 & 1010000 & 1011000 & 0 \\
 0010000 &         &   & 0110111 & 0110111 & 1 & 1010001 & 1011001 & 1 \\[-1.5ex]
 $\vdots$& 0010000 & 0 & 0111000 & 0110010 & 0 & 1010010 & 1011000 & 0 \\
 0011111 &         &   & 0111001 & 0110010 & 0 & 1010011 & 1011001 & 1 \\
 0100000 & 0100010 & 0 & 0111010 & 0110010 & 0 & 1010100 & 1011101 & 1 \\
 0100001 & 0100010 & 0 & 0111011 & 0110010 & 0 & 1010101 & 1011101 & 1 \\
 0100010 & 0100010 & 0 & 0111100 & 0110110 & 1 & 1010110 & 1011101 & 1 \\
 0100011 & 0100010 & 0 & 0111101 & 0110111 & 1 & 1010111 & 1011101 & 1 \\
 0100100 & 0100110 & 0 & 0111110 & 0110110 & 1 & 1011000 & 1011000 & 0 \\
 0100101 & 0100111 & 1 & 0111111 & 0110111 & 1 & 1011001 & 1011001 & 1 \\
 0100110 & 0100110 & 0 & 1000000 & 1001000 & 0 & 1011010 & 1011000 & 0 \\
 0100111 & 0100111 & 1 & 1000001 & 1001001 & 0 & 1011011 & 1011001 & 1 \\
 0101000 & 0100000 & 0 & 1000010 & 1001000 & 0 & 1011100 & 1011101 & 1 \\
 0101001 & 0100000 & 0 & 1000011 & 1001001 & 0 & 1011101 & 1011101 & 1 \\
 0101010 & 0100010 & 0 & 1000100 & 1001101 & 1 & 1011110 & 1011101 & 1 \\
 0101011 & 0100010 & 0 & 1000101 & 1001101 & 1 & 1011111 & 1011101 & 1 \\
 0101100 & 0100110 & 0 & 1000110 & 1001101 & 1 & 1100000 &         &   \\[-1.5ex]
 0101101 & 0100111 & 1 & 1000111 & 1001101 & 1 & $\vdots$& 1101111 & 1 \\
 0101110 & 0100110 & 0 & 1001000 & 1001000 & 0 & 1101111 &         &   \\
 0101111 & 0100111 & 1 & 1001001 & 1001001 & 0 & 1110000 &         &   \\[-1.5ex]
 0110000 & 0110010 & 0 & 1001010 & 1001000 & 0 & $\vdots$& 1111111 & 1 \\
 0110001 & 0110010 & 0 & 1001011 & 1001001 & 0 & 1111111 &         &   \\
 0110010 & 0110010 & 0 & 1001100 & 1001101 & 1 &         &         &   \\
 0110011 & 0110010 & 0 & 1001101 & 1001101 & 1 &         &         &   \\\hline
\end{tabular}
}
\caption{Collective decision function for an OLFM system.\label{tab:ex_c(x)_OLFM}}
\end{table}
\end{example}

It is interesting to note that in OLFM systems, instead of OLF systems, the actors of the same kind may not have the same satisfaction score. Observe that in Example~\ref{ex:OLFM_three_layer}, for instance, the satisfaction of a follower may be greater than the satisfaction of a mediator, and equal than the satisfaction of an independent actor.

\subsection{Properties and axiomatization for \SAT in OLFM systems}

In this subsection we shall prove first that all the properties for satisfaction in OLF systems also apply for OLFM systems. However, to establish an axiomatization in OLFM systems, we need to generalize the equal gain property and the opposite gain property, in order to consider the mediators in the layered graphs. Although it is not required for the axiomatization, we also introduce a generalization of the horizontal neutrality that is fulfilled for satisfaction in OLFM systems.

\begin{definition}\label{def:props_RAE_OLFM}
Let $\cS$ and $\cS'$ be two OLFM systems represented by the graphs $G$ and $G'$, respectively, such that $V(G)=V(G')$. Let $i,j,h$ be three different actors, and $k\in\matN$ such that $k\geq 0$. We say that a measure given by the function $f:V\to\matR$ satisfies the properties:
 \begin{enumerate}
  \item[$4b$] {\bf Equal absolute change property}:\\
	if $i\in\cL_{k-1}$, $j\in\cL_k$ and $E(G')=E(G)\cup\{(i,j)\}$,\\
	then either
	$f_{\cS'}(i)-f_\cS(i)=f_{\cS'}(j)-f_\cS(j)$ or
	$f_{\cS'}(i)-f_\cS(i)=f_\cS(j)-f_{\cS'}(j)$.
  \item[$5b$] {\bf Opposite gain property}:\\
	if $i\in{}V$, $j\in\tI$ and $E(G')=E(G)\cup\{(i,j)\}$,\\
	then either
	$f_{\cS'}(i)-f_\cS(i)=f_\cS(j)-f_{\cS'}(j)$ or
	$f_{\cS'}(i)-f_\cS(i)=f_{\cS'}(j)-f_\cS(j)$.
  \item[$6b$] {\bf Power neutrality for two opinion leaders}:\\
	if $h\in\cL_{k-1}$, $i\in\cL_{k-1}$, $j\in\cL_k$ with $P_G(j)=\{h\}$, and furthermore, $E(G')=E(G)\cup\{(i,j)\}$,\\
	then either
	$f_{\cS'}(i)-f_\cS(i)=f_\cS(h)-f_{\cS'}(h)$ or
	$f_{\cS'}(i)-f_\cS(i)=f_{\cS'}(h)-f_\cS(h)$.
 \end{enumerate}
\end{definition}

Note that the opposite gain property is a generalization of the property 5 of Definition~\ref{def:props_RAE_OLF}, because when $i\in\tL\cup\tI$, it only holds $f_{\cS'}(i)-f_\cS(i)=f_\cS(j)-f_{\cS'}(j)$. The equal absolute change property is a generalization of equal gain property, because when $i\in\tL\cup\tI$, it only holds $f_{\cS'}(i)-f_\cS(i)=f_{\cS'}(j)-f_\cS(j)$. The power neutrality for two opinion leaders is a generalization of horizontal neutrality, because for $k=2$, it only holds $f_{\cS'}(i)-f_\cS(i)=f_\cS(h)-f_{\cS'}(h)$.
Moreover, the properties $4b$ and $6b$ were introduced by~\cite{BRS11} for OLF systems ---i.e., OLFM systems with two layers--- not restricted to unanimity, i.e., so that followers can change their decisions based on a majority proportion of their opinion leaders.

The following result proves that all the previous properties are fulfilled by the satisfaction in OLFM systems.

\begin{theorem}
\label{the:OLFM_Rae_props}
For OLFM systems, the \SAT score satisfies the properties $1$, $2$, $3$, $4b$, $5b$ and $6b$ of Definitions~\ref{def:props_RAE_OLF} and~\ref{def:props_RAE_OLFM}.
\end{theorem}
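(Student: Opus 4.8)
The plan is to verify the six properties one at a time: the first three by short direct arguments, the last three through one common mechanism.

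\textbf{Properties 1, 2 and 3.} For \emph{symmetry}, if $S(i)=S(j)$ and $P(i)=P(j)$ then the involution $\sigma$ of $\{0,1\}^n$ transposing coordinates $i$ and $j$ commutes with the collective decision vector (actors $i$ and $j$ occupy interchangeable positions), so $C(\sigma x)=C(x)$ while $(\sigma x)_i=x_j$; hence $\sigma$ restricts to a bijection between $\{x\mid C(x)=x_i\}$ and $\{x\mid C(x)=x_j\}$ and $\SAT(i)=\SAT(j)$. For the \emph{dictator property}, if $S(i)=V\setminus\{i\}$ then $i$ is an opinion leader and every other actor has $i$ as its unique opinion leader, so every other actor copies $x_i$ and $C(x)=x_i$ for all $x$; thus $\SAT(i)=2^n$ (equivalently $i$ is a dictator and Lemma~\ref{lem:Rae_dummy_dictator} applies). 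For \emph{dictated independence}, if $|P_G(i)|=|P_{G'}(i)|=1$ then in each system the final decision of $i$ is fixed by its unique predecessor, independently of $x_i$, so $x_i$ influences no coordinate of the collective decision vector; hence $i$ is a dummy in both systems and Lemma~\ref{lem:Rae_dummy_dictator} yields $\SAT_\cS(i)=\SAT_{\cS'}(i)=2^{n-1}$.

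\textbf{Properties 4b, 5b and 6b.} In each case $\cS'$ is obtained from $\cS$ by adding one edge $(i,j)$. The first step is the structural observation that this addition changes only the coordinate $c_j$ of the collective decision vector: for $\ell\neq j$ the rule producing $c_\ell$ and the inputs it consults are untouched, so $c_\ell^{\cS'}(x)=c_\ell^\cS(x)$, and therefore $C^{\cS'}(x)\neq C^\cS(x)$ can occur only for $x\in\mathrm{Int}:=\{x\mid c_j^{\cS'}(x)\neq c_j^\cS(x)\}$. The second step is to compute $\mathrm{Int}$ from the unanimity rule: $c_j$ equals the common decision of $j$'s predecessors when they agree and equals $x_j$ otherwise, and adding the predecessor $i$ can only turn a unanimous configuration into a non-unanimous one; hence $x\in\mathrm{Int}$ exactly when the predecessors of $j$ in $G$ agree on some value $b$, the initial decision contributed by $i$ is $\overline b$, and $x_j=\overline b$, and then $c_j$ flips from $b$ to $\overline b=x_j$. (In $5b$ the set $P_G(j)$ is empty, so $c_j$ passes from $x_j$ in $\cS$ to the value $j$ now copies from $i$, and $\mathrm{Int}$ is where these differ; in $6b$, $P_G(j)=\{h\}$, so the value $b$ is the one contributed by $h$.)

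The decisive point is what membership in $\mathrm{Int}$ forces on the relevant initial decisions. In $4b$ and $6b$ one reads off $x_i=x_j\,(=\overline b)$ on $\mathrm{Int}$, and in $6b$ also $x_h=b=\overline{x_j}$; in $5b$ one gets $x_i=\overline{x_j}$ on $\mathrm{Int}$. Since $\SATbar(\ell,x)=[\,C(x)=x_\ell\,]$, on $\mathrm{Int}$ the indicator $\SATbar(i,x)$ coincides with $\SATbar(j,x)$ in $4b$, with $1-\SATbar(h,x)$ in $6b$, and with $1-\SATbar(j,x)$ in $5b$; while off $\mathrm{Int}$ every $\SATbar(\cdot,x)$ is unchanged from $\cS$ to $\cS'$ because $C$ is unchanged there. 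Summing over all $x$ therefore gives $\SAT_{\cS'}(i)-\SAT_\cS(i)=\varepsilon\,\big(\SAT_{\cS'}(j)-\SAT_\cS(j)\big)$ with one fixed sign $\varepsilon\in\{+1,-1\}$ (namely $\varepsilon=+1$ in $4b$, $\varepsilon=-1$ in $5b$, and the same identity with $h$ in place of $j$ and $\varepsilon=-1$ in $6b$), which is precisely one of the two alternatives asserted. When $i\in\tL\cup\tI$ the relation $\mathrm{Int}$ imposes between $x_i$ and the pivotal value is exactly the one selecting the ``equal gain'', ``opposite gain'', respectively ``horizontal neutrality'' branch, so the OLF properties $4$, $5$ and $6$ are recovered; this also explains why the weaker ``either/or'' form is enough here.

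\textbf{Where the difficulty lies.} The routine part is the final counting over $\mathrm{Int}$; the substance is the structural observation that the new edge localizes to $c_j$ together with the exact description of $\mathrm{Int}$. Care is needed in tracking the partition classes before and after the edge addition (for instance, in $5b$ a formerly independent $j$ becomes a follower and a formerly independent $i$ becomes an opinion leader, so $\cS'$ is indeed a valid OLFM system), in handling the degenerate predecessor sets of $5b$ and $6b$ inside the same framework, and --- if one works with the variant of the model in which influence is transmitted through predecessors' \emph{final} rather than initial decisions --- in showing that the flip of $c_j$ does not set off an unbounded cascade through the lower layers.
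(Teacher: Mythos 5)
Your handling of properties $1$, $2$ and $3$ matches the paper's argument and is fine. For $4b$, $5b$ and $6b$ there are two problems, one patchable and one that is a genuine gap. The ``structural observation'' that adding the edge $(i,j)$ changes only the coordinate $c_j$ is false in this model: the collective decision vector is computed layer by layer, and an actor consults the \emph{final} decisions of its predecessors, not their initial ones --- this is why the paper insists the vector ``must be determined in order'', and it is visible in Table~\ref{tab:ex_c(x)_OLFM} (e.g.\ for $x=1000100$ one gets $c_7=1$ only because $c_4=c_5=1$, even though $x_4=0$). So a flip of $c_j$ can cascade through every layer below $j$; you flag this only as a caveat about ``a variant of the model'', but it is the model. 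This part is repairable, since your counting only needs the inclusion $\{x\mid C_\cS(x)\neq C_{\cS'}(x)\}\subseteq\mathrm{Int}$ (which survives, $c_j$ being the root cause of any change) together with the observation, made tersely in the paper, that every cascaded flip goes in the same direction $b\to\overline b$ and hence cannot reverse the change in $C$.

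The real gap is your claim that membership in $\mathrm{Int}$ forces $x_i=x_j$ (in $4b$), $x_i=\overline{x_j}$ (in $5b$), $x_h=\overline{x_j}$ (in $6b$), and hence a single fixed sign $\varepsilon$ for the whole sum. That reading is valid only when the actor acquiring the new successor lies in the first layer, so that the value it transmits to $j$ is its initial decision --- which is exactly the OLF situation already covered by Theorem~\ref{the:Rae_satisfy_OLF}. The genuinely new OLFM case, and the entire reason properties $4$, $5$, $6$ had to be weakened to an either/or form, is when $i$ (resp.\ $h$) is a mediator, or in $5b$ a mediator or follower: then $\mathrm{Int}$ pins down $c_i(x)$ (resp.\ $c_h(x)$) relative to $x_j$, not $x_i$, and the per-vector difference $\SATbar_{\cS'}(i,x)-\SATbar_\cS(i,x)$ is $+1$ or $-1$ according to whether $x_i=c_i(x)$ or not. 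To conclude that one of the two displayed equalities holds for the \emph{sums}, one must control this sign across all $x$ on which $C$ changes; your proof never addresses this case, which is the substantive step of the theorem and the one the paper's proof isolates explicitly (``Second, consider $i\in\tM$\dots{} replace $x_i$ by $c_i(x)$\dots'').
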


\begin{proof}
For symmetry, for all $x\in\{0,1\}^n$, $P(i)=P(j)$ implies $c_i(x)=c_j(x)$. Further, as $S(i)=S(j)$, if $x_i\neq x_j$, then $c(x)=c(x-i+j)$; and if $x_i=x_j$, the satisfaction score does not change for the actors $i$ and $j$.

For the dictator property, if $S(i)=V\bac\{i\}$ we have that $i\in\tL$ and $|\tI|=|\tM|=0$,
which is the same case proved for OLF systems in~\cite{BRS11}, i.e., as $C(x)=x_i$ for all $x\in\{0,1\}^n$, then $\SAT(i)=2^n$.

For the dictated independence, let be $P(i)=\{j\}$, then for all $x\in\{0,1\}^n$ it holds $c_i(x)=x_j$, so the collective choice $C(x)$ is independent of the decision of the actor $i$. Hence, let be $b=\{0,1\}$, if $C(x)=b$, there are exactly $2^{n-1}$ initial decision vectors with $x_i=b$, and $2^{n-1}$ with $x_i=1-b$.

For what follows, note that for every $x\in\{0,1\}^n$ such that $C_\cS(x)=C_{\cS'}(x)$, it holds $\SATbar_\cS(i,x)=\SATbar_{\cS'}(i,x)$, for all $i\in{}V$.
Therefore, to determine $\SAT_\cS(i)$ and $\SAT_{\cS'}(i)$ we only need to consider the initial decision vectors $x\in\{0,1\}^n$ where $C_\cS(x)\neq{}C_{\cS'}(x)$.

For the equal absolute change property, first consider that $i\in\tL\cup\tI$ and $j\in\tM\cup\tF$.
As $c_{j,\cS}(x)\neq{}x_i$ and $c_{j,\cS'}(x)=x_i$, then $C_\cS(x)\neq{}x_i$ and $C_{\cS'}(x)=x_i$; hence $\SATbar_{\cS'}(i,x)-\SATbar_\cS(i,x)=1$.
If $c_{j,\cS}(x)=x_j$, then $x_j\neq{}x_i$, so $C_\cS(x)=x_j$ and $C_{\cS'}(x)\neq{}x_j$, which implies $\SATbar_\cS(j,x)-\SATbar_{\cS'}(j,x)=1$; and
if $c_{j,\cS}(x)\neq{}x_j$, then $x_j=x_i$, so $C_\cS(x)\neq{}x_j$ and $C_{\cS'}(x)=x_j$, implying $\SATbar_{\cS'}(j,x)-\SATbar_\cS(j,x)=1$.
The possible change of inclinations or decisions of successors of $j$ keeps that $C_\cS(x)\neq{}C_{\cS'}(x)$, and this does not contradicts the above.
Thus, by expression~(\ref{SATwithSATbar}) we have either $\SAT_{\cS'}(i)-\SAT_\cS(i)=\SAT_{\cS'}(j)-\SAT_\cS(j)$ 
or $\SAT_{\cS'}(i)-\SAT_\cS(i)=\SAT_\cS(j)-\SAT_{\cS'}(j)$.

Second, consider $i\in\tM$. Note that in this case, the inclination of actor $i$ also depends of their predecessors.
To deal with this, just replace $x_i$ in all the above equations by $c_i(x)$, and note that $c_i(x)=c_{i,\cS}(x)=c_{i,\cS'}(x)$,
so if $x_i=c_i(x)$, we obtain the same equations, and if $x_i\neq{}c_i(x)$, we obtain that $\SATbar_\cS(i,x)-\SATbar_{\cS'}(i,x)=1$,
getting the same final equations that above.

For the opposite gain property, first consider $i\in\tL\cup\tI$ and $j\in\tI$.
As it must hold that $x_i\neq{}c_j$, then $C_\cS(x)=x_j\neq{}x_i$ and $C_{\cS'}(x)=x_i\neq{}x_j$;
hence $\SATbar_{\cS'}(i,x)-\SATbar_\cS(i,x)=1$ and $\SATbar_\cS(j,x)-\SATbar_{\cS'}(j,x)=1$.
Second, consider that $i\in\tM\cup\tF$. For this case, just replace $x_i$ in all the above equations by $c_i(x)$, and note that $c_i(x)=c_{i,\cS}(x)=c_{i,\cS'}(x)$,
so it holds $c_i(x)\neq x_j$, $C_\cS(x)=x_j$ and $C_{\cS'}(x)=c_i(x)$;
hence, $\SATbar_\cS(j,x)-\SATbar_{\cS'}(j,x)=1$ and either $\SATbar_{\cS'}(i,x)-\SATbar_\cS(i,x)=1$ or $\SATbar_\cS(i,x)-\SATbar_{\cS'}(i,x)=1$.
Thus, by expression~(\ref{SATwithSATbar}) we have either
$\SAT_{\cS'}(i)-\SAT_\cS(i)=\SAT_\cS(j)-\SAT_{\cS'}(j)$ or $\SAT_{\cS'}(i)-\SAT_\cS(i)=\SAT_{\cS'}(j)-\SAT_\cS(j)$.

For the power neutrality for two opinion leaders, first consider $i\in\tL\cup\tI$, $j\in\tM\cup\tF$ and $h\in\tL$.
As $|P_G(j)|=1$, $c_{j,\cS}(x)=x_h$, and as $|P_{G'}(j)|=2$, $c_{j,\cS'}(x)\neq{}x_j$ iff $x_h=x_i\neq{}x_j$.
Let $b\in\{0,1\}$, if $C_\cS(x)=b$ and $C_{\cS'}(x)=1-b$, then $c_{j,\cS}(x)=b=x_h$ and $c_{j,\cS'}(x)=1-b=x_i=x_j$,
hence $\SATbar_{\cS'}(i,x)-\SATbar_\cS(i,x)=1=\SATbar_\cS(h,x)-\SATbar_{\cS'}(h,x)$.
The possible change of inclinations of successors of $j$ keeps that $C_\cS(x)\neq{}C_{\cS'}(x)$, and this does not contradicts the above.
Thus, by expression~(\ref{SATwithSATbar}) we have $\SAT_{\cS'}(i)-\SAT_\cS(i)=\SAT_\cS(h)-\SAT_{\cS'}(h)$.

Second, consider $h\in\tM$. For this case, just replace $x_h$ in all the above equations by $c_h(x)$, and note that $c_h(x)=c_{h,\cS}(x)=c_{h,\cS'}(x)$, so either $\SATbar_\cS(h,x)-\SATbar_{\cS'}(h,x)=1$ or $\SATbar_{\cS'}(h,x)-\SATbar_\cS(h,x)=1$.
Finally, consider $i\in\tM\cup\tF$. Replacing $x_i$ by $c_i(x)$, where $c_i(x)=c_{i,\cS}(x)=c_{i,\cS'}(x)$, we obtain analogous equations.
Therefore we have either $\SAT_{\cS'}(i)-\SAT_\cS(i)=\SAT_\cS(h)-\SAT_{\cS'}(h)$ or $\SAT_{\cS'}(i)-\SAT_\cS(i)=\SAT_{\cS'}(h)-\SAT_\cS(h)$. \qed
\end{proof}

Note that the satisfaction normalization of Definition~\ref{def:normalization_RAE_OLF} remains the same for OLFM systems, because the collective decision function is the same. From the previous theorem, since for OLFM systems the satisfaction score satisfies power neutrality for two opinion leaders, then it also satisfies horizontal neutrality. In what follows we prove an axiomatization of satisfaction for OLFM systems.

\begin{theorem}
For OLFM systems, the \SAT score is the unique measure that satisfies the properties $1$, $2$, $3$, $4b$, $5b$, $6$ and $7$ of Definitions~\ref{def:props_RAE_OLF}, \ref{def:normalization_RAE_OLF} and~\ref{def:props_RAE_OLFM}.
\end{theorem}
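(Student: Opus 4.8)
The statement splits into the usual two directions. \emph{Existence} is essentially already done: Theorem~\ref{the:OLFM_Rae_props} gives that $\SAT$ satisfies properties $1$, $2$, $3$, $4b$, $5b$ and $6b$, and, as observed right after that theorem, $6b$ with $k=2$ specializes to $6$, so $\SAT$ also satisfies $6$; and property $7$ holds for $\SAT$ by interchanging the two sums in Definition~\ref{def:SAT}, namely $\sum_{i\in V}\SAT(i)=\sum_{i\in V}\sum_{x}\SATbar(i,x)=\sum_{x}|\{i\in V\mid C(x)=x_i\}|$. So the content of the theorem is \emph{uniqueness}: if a measure $f$ satisfies properties $1$--$3$, $4b$, $5b$, $6$ and $7$ on all OLFM systems over a fixed vertex set $V$, then $f=\SAT$.

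I would first isolate an auxiliary fact: every follower $j$ with $|P(j)|=1$ has $f(j)=2^{n-1}$. Indeed, by property $3$ the value $f(j)$ is the same in all OLFM systems over $V$ in which $j$ has a single predecessor, so it suffices to evaluate it in the ``dictator'' system $\cS^*$ in which one actor $h$ points to all of $V\setminus\{h\}$: there property $2$ forces $f_{\cS^*}(h)=2^n$, property $1$ forces a common value $\gamma$ on the $n-1$ single-predecessor followers, and property $7$, together with the fact that $C_{\cS^*}(x)=x_h$ for every $x$, forces $2^n+(n-1)\gamma=2^n+(n-1)2^{n-1}$, i.e.\ $\gamma=2^{n-1}$. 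Since such a $j$ is a dummy, this matches $\SAT(j)=2^{n-1}$ (Lemma~\ref{lem:Rae_dummy_dictator}).

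The main argument is an induction on the number of edges $m=|E(G)|$ of an OLFM system over $V$, paralleling the proof of Theorem~\ref{the:OLF_Rae_axioms} in~\cite{BRS12}. The base case $m=0$ is the edgeless system: all actors are independent, hence pairwise symmetric, so property $1$ makes $f$ constant, and property $7$ pins the constant to the common value of $\SAT$ there. For the inductive step, given an OLFM system $\cS'$ with $m\ge 1$ edges I would select an edge $(i,j)$ whose deletion again yields an OLFM system $\cS$ over $V$ --- such an edge always exists, e.g.\ an edge into a deepest-layer follower --- so that $E(G')=E(G)\cup\{(i,j)\}$ and the induction hypothesis gives $f_\cS=\SAT_\cS$. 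Properties $4b$, $5b$ and $6$ then relate $f_{\cS'}$ to $f_\cS$ at $i$, at $j$, and at the remaining predecessors of $j$; combining these with the auxiliary fact (when $|P_{G'}(j)|=1$), with property $1$ applied to the independent actors of $\cS'$ (which all carry one common value), and with the normalization of property $7$, one solves for $f_{\cS'}(a)$ at every actor $a$ and checks it equals $\SAT_{\cS'}(a)$.

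The delicate point --- and the step I expect to be the real obstacle --- is the \emph{sign ambiguity} in properties $4b$ and $5b$: unlike the OLF properties $4$ and $5$ used in~\cite{BRS12}, they only state that $f_{\cS'}(i)-f_\cS(i)$ equals $\pm\bigl(f_{\cS'}(j)-f_\cS(j)\bigr)$, so a priori a ``rogue'' assignment could differ from $\SAT$ by choosing the wrong branch on some edge. The plan to defeat this is to arrange that, for the chosen edge, the value of the \emph{lower} actor is already pinned --- $j$ itself via the auxiliary fact, or, after first stripping further deepest-layer edges by the induction, a predecessor $h$ of $j$ handled through the \emph{unambiguous} property $6$ --- so that only the branch realized by $\SAT$ is consistent, and then to propagate the conclusion to all remaining actors through property $7$. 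Making this bookkeeping airtight for arbitrary layered topologies --- in particular always choosing the deleted edge so that the layered/OLFM structure is preserved, and verifying that the chain of already-determined values reaches every actor --- is where the genuine care is needed; the bipartite setting of~\cite{BRS12} avoids both of these complications.
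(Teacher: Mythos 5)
Your proposal follows essentially the same route as the paper: an edge-by-edge constructive/inductive argument in which properties $4b$, $5b$ and $6$ relate the scores of the actors incident to a single added edge, with properties $1$, $2$, $3$ and $7$ pinning the base values; the only structural difference is that the paper anchors the induction at mediator-free systems via Theorem~\ref{the:OLF_Rae_axioms} and then builds the mediation layers top-down, rather than starting from the edgeless graph as you do. The ``sign ambiguity'' you single out as the delicate point is real, but the paper does not treat it any more carefully than you --- it merely asserts that the two branches of $5b$ (resp.\ $4b$) ``yield a system of linear equations easy to solve'' from which the two unknowns $f_{\cS'}(i)$ and $f_{\cS'}(j)$ are uniquely determined --- so your plan of first pinning the lower endpoint via dictated independence and then closing with the normalization axiom is, if anything, more explicit than the published argument.
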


\begin{proof}
We know by Theorem~\ref{the:OLFM_Rae_props} that in OLFM systems the \SAT score satisfies properties $1$, $2$, $3$, $4b$, $5b$, $6$ and $7$. To prove uniqueness it remains to show that, on the assumption that $f:V\to\matR$ satisfies the seven axioms, then this score must be equal to \SAT.

By Theorem~\ref{the:OLF_Rae_axioms}, we know that if there are no mediators ---i.e., we have an OLF system---, property $4b$ is replaced by property $4$, so the score is equal to \SAT.
Now we proceed constructively.

First, given an OLFM system $\cS$ without mediators, we can transform a follower $i$ in a mediator by connecting it with an independent actor $j$, obtaining a new OLFM system $\cS'$.
Thus, by property~$5b$, it holds either $f_{\cS'}(i)-f_\cS(i)=f_\cS(j)-f_{\cS'}(j)$ or $f_{\cS'}(i)-f_\cS(i)=f_{\cS'}(j)-f_\cS(j)$.
As $f_\cS(i)$ and $f_\cS(j)$ are uniquely determined by Theorem~\ref{the:OLF_Rae_axioms}, both equations yield a system of linear equations easy to solve,
so that the unknowns, $f_{\cS'}(i)$ and $f_{\cS'}(j)$, can be uniquely determined.

From the above, note that actor $j$ in $\cS'$ becomes in a follower.
And also note that we can transform step by step other independent actors $j$ in followers, such that $P_{G'}(j)=\{i\}$.
In each step, satisfaction score can be uniquely determined by using the same property.

Secondly, suppose that we have an OLFM system $\cS$ with only one layer of mediation, like the obtained above, with a follower $j\in\cL_3$ so that $P_{G'}(j)=\{i\}$.
Now we can transform a follower $h\in\cL_2$ in a mediator, by connecting it with follower $j$, obtaining a new OLFM system $\cS'$.
Thus, by property~$4b$, it holds either $f_{\cS'}(h)-f_\cS(h)=f_\cS(j)-f_{\cS'}(j)$ or $f_{\cS'}(h)-f_\cS(h)=f_{\cS'}(j)-f_\cS(j)$.
This is basically the same kind of system of linear equations obtained with property~$5b)$, and as $f_\cS(h)$ and $f_\cS(j)$ are uniquely determined, then $f_{\cS'}(h)$ and $f_{\cS'}(j)$ can also be uniquely determined.
We can also repeat this process by transforming new followers $h\in\cL_2$ in mediators, obtaining in each step that satisfaction can be uniquely determined.

Of course, the same kind of transformations can be done to create lower layers, and therefore to produce any OLFM system.

Finally, note that property~$1$ implies that there is a constant $c\in\matR$ such that for all $i\in\tI$, $f(i)=c$.
Hence, for every OLFM system, we can provide new independent actors and then using them as opinion leaders, followers or mediators, in such a way that $f$ can always be uniquely determined. \qed
\end{proof}

\section{Concluding remarks}

We have studied the satisfaction score proposed by~\cite{BRS11} for a generalization of OLF systems, namely the OLFM systems. This generalized model incorporates an additional kind of actors called mediators, that act at the same time as opinion leaders and followers. Mediators allow the presence of several layers of influence, and hence they establish a more general hierarchy among the different actors.

By using this generalized model, the main results of this work are two: first, that the properties of satisfaction for OLF systems support natural generalizations for OLFM systems; and second, that these properties allow to establish a new axiomatization of the score over the generalized model, when the collective decision vectors are restricted by the expression~(\ref{eq:OLF_c}). This axiomatization extends the results of~\cite{BRS12} for OLF systems.

Interestingly enough, the measure coincides with the well established \RAE index, that is closely related to the Banzhaf value over the set of monotonic decision functions which can be casted as characteristic functions of simple games. Equation~(\ref{eq:RAE_BZ}) suggests that through small modifications, such as a change in the normalization of property $7$, it is possible to define an axiomatization of the Banzhaf value for OLFM systems. In the same vein, note that there exist other axiomatizations of the Banzhaf value for simple games~\cite{DS79}. Moreover, for OLF systems,~\cite{BRS11,BRS12} describe another score called {\em power score}, that is very similar to the satisfaction in terms of the properties that it meets. Analogously to \SAT and \RAE, the power score is closely related with the power index called {\em Holler value}~\cite{Hol82}, also studied in the context of simple game theory. It remains open to study whether the power score also meets the generalized versions of the properties for OLFM systems, as well as if it admits an axiomatization for this generalized model.

Additionally, it may be interesting to determine how far we can extend the model so that the properties are still met, and even how far it is possible to establish an axiomatization of the scores.
For instance, we may consider more general collective decision vectors, like the one described by~\cite{BRS11}:\footnote{In~\cite{BRS11} the authors define these collective decision vectors to study the satisfaction score, but they do not define an axiomatization for the satisfaction in the systems that use them. The known axiomatization that we mention in this paper is defined in~\cite{BRS12} for the collective decision vectors of Definition~\ref{def:OLF_C}.}
\[
 c_i=\begin{cases}
 b   & \text{if  $|\{j\in P_G(i)\mid x_j=b\}|>\lfloor q\cdot|P_G(i)|\rfloor,$}\\
 x_i & \text{otherwise}
\end{cases}
\]
where $b\in\{0,1\}$ and $1/2\leq q<1$. This parameter $q$, the {\em fraction value}, represents the fraction of opinion leaders with the same inclination that is necessary to influence the decision of a follower. Observe that the unanimity condition corresponds to consider the case when the fraction value $q$ is large enough. Note also that by Lemma~\ref{lem:monotony}, the system remains monotonic. From these generalized collective decision vectors emerge both the equal absolute change property and the power neutrality for two opinion leaders in OLF systems.

Another possibility is to consider connections among actors that belongs to not immediately consecutive layers. A first approach could be dealing with {\em star mediation influence games}, a collective decision making model proposed in~\cite{MRS13}. This model is based on star graphs, so that we have five kind of actors: a unique mediator that acts as central actor, a set of opinion leaders that point to the mediator, a set of followers that are influenced by the mediator, another set of actors that point to the mediator and at the same time are influenced by it, and a set of independent actors. The interesting fact of this model is that despite it breaks the layered structure of the graphs, it is still simple, because the fraction value only affects the mediator. Other extremal cases of influence games were introduced in~\cite{MRS12}.

Finally, as pointed out by~\cite{BRS11}, other related models that can be considered are games with $r$ alternatives~\cite{Bol86,Bol93,Bol00,Bol02} or decision-making situations with several levels of approval in the input and the output, rather than only one~\cite{Fre05,Fre05b,FZ03}.

\section*{Acknowledgements}
The author thanks to Maria Serna and Xavier Molinero for their comments in the preparation of the paper.

\bibliographystyle{plain}

\end{document}